\documentclass{amsart}

\usepackage{extarrows} 
\usepackage{amssymb}
\usepackage{amsfonts}
\usepackage{amsthm}
\usepackage{graphicx}
\usepackage{subcaption}
\usepackage{color}
\usepackage{url}
\usepackage{xspace}
\usepackage{microtype}
\usepackage{tikz}
\usepackage{enumitem}
\usepackage{hyperref}
\usepackage{pgf}
\usepackage{tikz-3dplot}
\usepackage{tikz-cd}
\usepackage{float}
\usepackage[utf8]{inputenc}

\definecolor{darkblue}{RGB}{0,0,160}
\hypersetup{
colorlinks,%
citecolor=black,%
filecolor=black,%
linkcolor=darkblue,%
urlcolor=darkblue
}

\newtheorem{thm}{Theorem}
\newtheorem*{thm*}{Theorem}

\newtheorem{question}[thm]{Question}
\newtheorem{conj}[thm]{Conjecture}

\theoremstyle{definition}
\newtheorem{example}[thm]{Example}
\newtheorem{remark}[thm]{Remark}
\newtheorem{defn}[thm]{Definition}

\newcommand{\ring}[1]{\ensuremath{\mathbb{#1}}}
\newcommand\RR{\ring{R}}
\newcommand\ZZ{\ring{Z}}

\newcommand\cL{{\mathcal L}}
\newcommand\cM{{\mathcal M}}

\newcommand\cP{{\mathcal P}}

\newcommand\cX{{\mathcal X}}

\newcommand\ba{{\bf a}}

\newcommand\bk{{\bf k}}

\newcommand\bx{{\bf x}}

\newcommand\fN{{\mathfrak N}}

\newcommand\fZ{{\mathfrak Z}}

\newcommand\cPc{{\mathcal P_{\textbf{c}}}}
\newcommand\Lcons{{\cL_{\text{cons}}}}

\newcommand\lin{{\text{lin}}}

\newcommand{\simeqd}{\mathrel{\rotatebox[origin=c]{-90}{$\simeq$}}}

\begin{document}

\title[Duality in mass-action networks]{Duality in mass-action networks}
\author{Alexandru Iosif}
\address{Area of Applied Mathematics, Universidad Rey Juan Carlos, Madrid, Spain}
\email{alexandru.iosif@urjc.es}


\begin{abstract}
Mass-action networks are special cases of chemical reaction networks. For 
these systems, we argue that conserved quantities are dual to internal 
cycles. We introduce maximal invariant polyhedral 
supports, and we conjecture that there is a duality relation between 
preclusters and maximal invariant polyhedral supports. Given the 
close relation between maximal invariant polyhedral supports and siphons, we 
also conjecture that siphons and preclusters are dual objects.
\end{abstract}

\subjclass{37N25, 80A30, 92C42, 13P15, 13P25}

\keywords{Polynomial systems in biology, chemical reaction networks,
steady states, duality}

\maketitle

\section{Introduction}

The syntagma ``mass-action'' originates from the work of Guldberg and
Waage back in the XIX century, which culminated with the \emph{Law of
  mass-action}. This law is a rough dynamical approximation to the way
molecules interact and states that ``The rate at which a unit of a
chemical species is consumed or produced by a chemical reaction is
proportional to the product of the concentrations of the reactants''
\cite{erdi1989mathematical}. Interestingly enough, this principle has
been successfully applied to biological problems (see
\cite{mclean1938application} for an early review from 1938 and
\cite{voit2015150} for a more recent review of these applications; the
close relation between mass-action networks and algebraic geometry has
been reviewed in \cite{dickenstein2016biochemical}).

Mathematically, a \emph{chemical reaction} is a transformation between two 
sets of chemical species; a set of chemical reactions is called \emph{chemical
reaction network}. Often, chemical reaction networks are regarded as
dynamical systems in which the concentrations of the chemical species
are described by partial differential equations depending on the position
and time. When the chemical system is spatially homogeneous, one can
instead use ordinary differential equations. This rough approximation
is often made in Biochemistry where the rate at which a concentration
$x_i$ changes is
\[
\dot x_i = \sum_j \phi_{ij} (x_1,\ldots,x_n),
\]
where $j$ runs through all chemical reactions involving $x_i$ and
$\phi_{ij}$ is the rate at which $x_i$ is consumed or produced by the 
$j^{\text{th}}$ reaction. In the special case of the mass-action Law, the 
functions $\phi$ are monomials with real coefficients.

In this paper, we study a duality theory for mass-action networks (that is, 
chemical reaction networks whose dynamics is of mass-action type). We review 
some of the combinatorial and algebraic work in the literature, and show 
that two of the combinatorial objects studied in the literature are dual. More 
precisely, we show that conserved quantities are dual to minimal 
cycles. For conservative mass-action networks that do not have two species 
with exactly the same rates and have at least one positive steady state, 
we also conjecture that there is a duality 
relation between preclusters and maximal invariant 
polyhedral supports. Given the close relation between maximal invariant 
polyhedral supports and siphons, we finally conjecture that there is a duality 
relation between siphons and preclusters.

\section{Combinatorial view of mass-action networks}

Consider a set of $n$ chemical species,
$\{X_1,X_2,\ldots, X_n\}$. The set $\mathcal C$ of
chemical complexes on $\{X_1,X_2,\ldots, X_n\}$ is the free commutative monoid
generated by $\{X_1,X_2,\ldots, X_n\}$ whose operation is denoted by
$+$. Thus, for example, $X_1+X_1+X_3=:2X_1+X_3$ denotes a chemical complex.

\begin{defn}
\label{defn:crn}
A \emph{chemical reaction network} is any relation
\[
\fN\subseteq \mathcal C^2.
\]
We usually denote by $r$ the cardinal of $\fN$. In the 
following, $\fN$ is going to be finite.
\end{defn}

\begin{example}
\label{ex:1}
The following chemical reaction network models one of the ways in
which a protein $A$ can be phosphorylated by an enzyme $E$.
\[
\begin{tikzcd}[row sep=small, column sep=small, every arrow/.append
style={shift left=.75}]
A+E \arrow{rr} && AE \arrow{ll} \arrow{rr} &&
A_p+E
\end{tikzcd}
\]
Here, $AE$ denotes an intermediate species and $A_p$ denotes the
phosphorylated protein. Hence, the set of species is
$\{A, E, AE, A_p\}$. Complexes will be $\ZZ_{\ge0}-$linear
combinations of these species. Thus, $A+E$, $AE$, and $A_p+E$ are
examples of complexes.
\end{example}

\begin{remark}
In order to keep notation easy, whenever considering a chemical
reaction network it is assumed an underlying ordering of both the
species and the reactions; otherwise, we would need to take quotients
by symmetric groups.
\end{remark}

Now, given a field $\mathbb K$, there is a natural map
\[
\begin{array}{cccc}
  \mu:&\mathcal C&\to&\mathbb K[x_1,\ldots,x_n]\\
      & \sum_{i=1}^n c_iX_i&\mapsto&\prod_{i=1}^nx_i^{c_i}.
\end{array}
\]
This map can be extended coordinate-wise to elements of $\fN$. We also use the 
following notation:
\begin{enumerate}
    \item $\cX := \{x_1,\ldots,x_n\}$,
    \item $\bx := (x_1,\ldots, x_n)$,
    \item and, given a monomial $m$, we denote its set of variables by $\cM$.
\end{enumerate}

\begin{defn}
\label{defn:mass-action-net}
The \emph{mass-action network} associated with the chemical reaction network $\fN$ is
\[
\mu(\mathfrak N)\times \mathbb R^r_{>0},
\]
where $r=|\fN|$. Since this provides us with a bijection between chemical 
reaction networks and mass-action networks, when clear from the context, we 
are going to abuse the notation and use $\fN$ to denote both $\fN$ and 
$\mu(\fN)\times\mathbb R^r$.
\end{defn}

\begin{remark}
The set $\RR_{>0}^r$ from Definition~\ref{defn:mass-action-net} is to
be interpreted as a space of parameters. The coordinates of this
space are called \emph{rate constants}, and they are usually denoted
by $k_1$, $\ldots$, $k_r$. In mathematical biology literature, arrows
are usually identified with $k_i$.
\end{remark}

\begin{example}
\label{ex:2}
Consider the chemical reaction network of Example
\ref{ex:1}.
We have
\begin{equation}
\label{ex:1site}
\begin{array}{l}
  \mu
\left(
  \begin{tikzcd}[row sep=small, column sep=small, every arrow/.append
  style={shift left=.75}]
  A+E \arrow{rr} && AE \arrow{ll} \arrow{rr} && A_p+E
  \end{tikzcd}
  \right)\times \RR^3_{>0}=
  \\
=\begin{tikzcd}[row sep=small, column sep=small, every arrow/.append
  style={shift left=.75}]
  x_1x_2 \arrow{rr}{k_1} && x_3 \arrow{ll}{k_2} \arrow{rr}{k_3} && x_2x_4
  \end{tikzcd}
\end{array}    
\end{equation}

\end{example}

Hence, a mass-action network is equivalent to a finite
directed graph whose vertices are labelled by monomials and whose
edges are labelled by \emph{rate constants}. By abusing the notation, $\fN$ 
 also denotes the corresponding graph.

Note that, to each chemical reaction network $\fN$, we can associate two
matrices of nonnegative integers as follows:
\begin{enumerate}[label=\roman*]
\item Consider all the sources of $\fN$ and represent the
$\ZZ_{\ge0}$ coefficients of the $i^{\text{th}}$ source as the column
of a matrix $Y_e\in\ZZ_{\ge0}^{n\times r}$. $Y_e$ is called the \emph{educt matrix}.
\item Then consider all the sinks of $\fN$ and represent the
$\ZZ_{\ge0}$ coefficients of the $i^{\text{th}}$ sink as the column of
a matrix $Y_p\in\ZZ_{\ge0}^{n\times r}$. $Y_p$ is called the \emph{product matrix}.
\end{enumerate}
The orderings of the elements of the previous matrices can be chosen so that they coincide with the orderings in $\fN$. Thus, every chemical reaction network can be expressed as
\[
\begin{tikzcd}
\prod_{i=1}^n x_i^{(Y_e)_{i_1}}\arrow{r}{k_1}&\prod_{i=1}^n x_i^{(Y_p)_{i_1}}
\end{tikzcd}, \ \ldots,
\begin{tikzcd}
\prod_{i=1}^n x_i^{(Y_e)_{i_r}}\arrow{r}{k_r}&\prod_{i=1}^n x_i^{(Y_p)_{i_r}},
\end{tikzcd}
\]
or, in short, as
$\begin{tikzcd}
x^{Y_e}\arrow{r}{k}& x^{Y_p}.
\end{tikzcd}$
If we use the notation
$\bk = (k_1,\ldots,k_r)$, any mass-action network can be uniquely
expressed as a quadruple $(Y_e,Y_p,\bk,\bx)$. The difference $Y_p-Y_e$ is
known as the \emph{stoichiometric matrix} of $\fN$. Hence, we will
call the quadruple $(Y_e,Y_p,\bk,\bx)$ \emph{the stoichiometric
representation of $\fN$}.

\begin{remark}
Note that, to define a chemical reaction network, in principle, one
does not need the level of generality introduced in this paper. However, we often argue by taking sums of reactions
(e.g., the alternative proof from the author's thesis of the well known fact that internal cycles
are non negative elementary flux modes~\cite{schuster1994elementary}
makes use of such sums). It is our philosophy that the language from
Definition~\ref{defn:crn} clarifies the underlying combinatorial
structure of the proofs. Moreover, this formalism could potentially provide a way of representing polynomial ideals by such directed graphs, translating algebraic properties to combinatorial ones and vice versa. This translation is implicit in chemical reaction network papers but, to our knowledge, it has not been used so far in a pure mathematical framework.
\end{remark}

\begin{example}
The educt and product matrices of network~\eqref{ex:1site} are
\[
Y_e=
\left(
  \begin{array}{ccc}
    1&0&0\\
    1&0&0\\
    0&1&1\\
    0&0&0
  \end{array}
\right)
\text{ and } \
Y_p=
\left(
  \begin{array}{ccc}
    0&1&0\\
    0&1&1\\
    1&0&0\\
    0&0&1
  \end{array}
\right).
\]
\end{example}

The reason we have introduced this notation is that the dynamics
of a mass-action network is given by the ODE system
\begin{equation}
\dot \bx^T = (Y_p-Y_e)\text{diag}(\bk)\left(\bx^T\right)^{Y_e},
\label{eq:dyn-sys}
\end{equation}
where $\left(\bx^T\right)^{Y_e}$ is, by definition, a column vector of length 
$r$ such that
\[
\left(\left(\bx^T\right)^{Y_e}\right)_j = \prod_{i=1}^n x_i^{(Y_e)_{ij}}.
\]

\begin{example}
The dynamics of network~\eqref{ex:1site} is given by
\[
\begin{array}{ll}
  \dot x_1=&-k_1x_1x_2+k_2x_3\\
  \dot x_2=&-k_1x_1x_2+k_2x_3+k_3x_3\\
  \dot x_3=&k_1x_1x_2-k_2k_3-k_3x_3\\
  \dot x_4=&k_3x_3.
\end{array}
\]
\end{example}

Realistic models tend to be large; hence, often, one does not intend to
solve them, but rather studies their asymptotic behaviour, like their steady
states. Steady states are defined as the nonnegative solutions of the following parametric family of systems
\begin{equation}
(Y_p-Y_e)\text{diag}(\bk)\left(\bx^T\right)^{Y_e} = 0.\label{eq:st-st}
\end{equation}

Note that, when one replaces the dynamical system \eqref{eq:dyn-sys}
with the semialgebraic static system~\eqref{eq:st-st}, information about the
conserved quantities is lost. Therefore, one has to regard the
solutions of system~\eqref{eq:st-st} with respect to such conserved
quantities. In biochemical reaction networks one usually restricts
conservation laws to linear ones, that is, those which are of the form
$z_1x_1 + \ldots z_nx_n = c$, where $z$ is an element of the
conservation space $\Lcons:=\text{left--ker} (Y_p-Y_e)$ and $c$ is the value
of the conserved quantity, that is, given the initial conditions of
the corresponding ODE system, each one of such $c$ values is a
constant along trajectories (see \cite{desoeuvres2024computational, desoeuvres2024reduction} for two complementary studies of nonlinear conservation laws). Here, given a matrix $A$, by
$\text{left--ker}(A)$  we understand the linear space formed by vectors $\ba$ 
such that $A^T\ba^T=0$. As $\bx$ are nonnegative, it only makes
sense to consider those solutions to the conservation equations
contained in $\RR^n_{\ge0}$. The sets of the solutions of these linear
conservation laws (that is, systems of inhomogeneous linear
equations) inside $\RR^n_{\ge0}$ give rise to polyhedra called
\emph{invariant polyhedra} (note that we intersect a linear space with
the cone $\RR^n_{\ge0}$). As the values $c$ are varied, the
associated invariant polyhedra varies as well, hence, as the values of the
conserved quantities change, the shape of the invariant polyhedron
also changes. However, it is known that there is only a finite number
of combinatorial changes of such polyhedra \cite{de2009graphs}.

Sometimes, in the literature, $\cP_{{\bf{x}}(0)}$ denotes the 
invariant polyhedron corresponding to the vector ${\bf{x}}(0)$ of initial 
conditions (this is the case, for instance, in \cite{alg-030}). Now, if 
${\bf{x}}(0)$ and ${\bf{y}}(0)$ are two vectors of initial conditions such 
that $\bf z \cdot {\bf{x}}(0)=\bf z \cdot {\bf{y}}(0)$ for all $\bf z$ in the 
left kernel of $Y_p-Y_e$, then $\cP_{{\bf{x}}(0)}=\cP_{{\bf{y}}(0)}$. If we 
choose a generating system $\fZ=({\bf z}_1,\ldots,{\bf z}_q)$ for the left 
kernel of $Y_p-Y_e$ and we denote
\[
{\bf c} := \left({\bf z}_1\cdot {\bf{x}}(0),\ldots,{\bf z}_q\cdot {\bf{x}}
(0)\right),
\]
then we can define the following equivalence class
\begin{multline*}
    \cPc|_{\fZ}:=\left[\cP_{{\bf{x}}(0)}\right]=\\
    =\left\{
    \cP_{{\bf{y}}(0)}|\left({\bf z}_1\cdot 
    {\bf{x}}(0),\ldots,{\bf z}_q\cdot {\bf{x}}(0)
    \right)=
    \left(
    {\bf z}_1\cdot {\bf{y}}(0),\ldots,
    {\bf z}_q\cdot {\bf{y}}(0)\right)= {\bf c}
    \right\}.
\end{multline*}
When it is clear from the context the choice of $\fZ$, we write $\cPc$ instead 
of $\cPc|_{\fZ}$.

\begin{remark}
Often, instead of using any generating set of $\fZ$, it is more useful to 
choose the extreme rays of the \emph{conservation cone}, $\Lcons\cap\RR^{\dim \Lcons}_{\ge0}$.
\end{remark}

\begin{remark}
Note that when the trajectories of \eqref{eq:dyn-sys} are bounded
(i.e., when the system is \emph{conserved}), it is enough to restrict
$z$ to the conservation cone, and invariant polyhedra become
polytopes.    
\end{remark}

\begin{example}
  Consider the following mass-action network
  \[
\begin{tikzcd}[row sep=small, column sep=small]
  x_1x_4 && x_3 \arrow{ll}[swap]{k_1} \arrow{rr}{k_2} && x_2x^2_4
  \end{tikzcd}
\]
The associated ODE system is
\[
  \left\{
\begin{array}{l}
  \dot x_1=k_1x_3,\\
  \dot x_2=k_2x_3,\\
  \dot x_3=-k_1x_3-k_2x_3,\\
  \dot x_4=k_1x_3+2k_2x_3,
\end{array}
\right.
\qquad
x_i\ge0, \ i\in[4].
\]
It has two linearly independent conservation laws:
\[
  \begin{array}{l}
    x_1+x_2+x_3=c_1\\
    x_1+2x_2-x_4=c_2,
  \end{array}
\]
where $c_1$ and $c_2$ are constant along the trajectories.  Hence, for
each pair of values $\mathbf{c}:=(c_1,c_2)\in\RR^2_{\ge0}$, there is an
invariant polyhedron $\cP_{\mathbf c}$ inside which the trajectories of
the ODE system live, defined by the following system:
\[
  \left\{
  \begin{array}{l}
    x_1+x_2+x_3=c_1\\
    x_1+2x_2-x_4=c_2\\
    x_i\ge0, \text{ for } i\in\{1,2,3,4\}.
  \end{array}
  \right.
\]
It is clear that each $\cP_{\mathbf c}$ lives inside a two dimensional affine space, which is the solution space to the first two equations:
\[
  \mathbf{x}=
  \left(
    \begin{array}{c}
      0\\0\\c_1\\-c_2
    \end{array}
  \right)^T
  +
  \alpha\left(
    \begin{array}{c}
      1\\0\\-1\\1
    \end{array}
  \right)^T
  +
  \beta\left(
    \begin{array}{c}
      0\\1\\-1\\2
    \end{array}
  \right)^T:=\mathbf v_0 + \alpha \mathbf v_1 + \beta \mathbf v_2,\quad
  \alpha,\beta\in\RR.
\]
This means that, for each $\mathbf c$, $\cP_{\mathbf c}$ lives in an affine
space parallel to the the linear space $L:=\lin(\mathbf v_1,\mathbf
v_2)$. In Figure \ref{figure:1}, we show how $\cP_{\mathbf c}$ changes
as $\mathbf c$ takes values in the following set, which we chose because it was simple representation-wise:
\[
  (c_1,c_2)\in\{(1-t,t)|t\in[0,1]\}.
\]   

\begin{center}
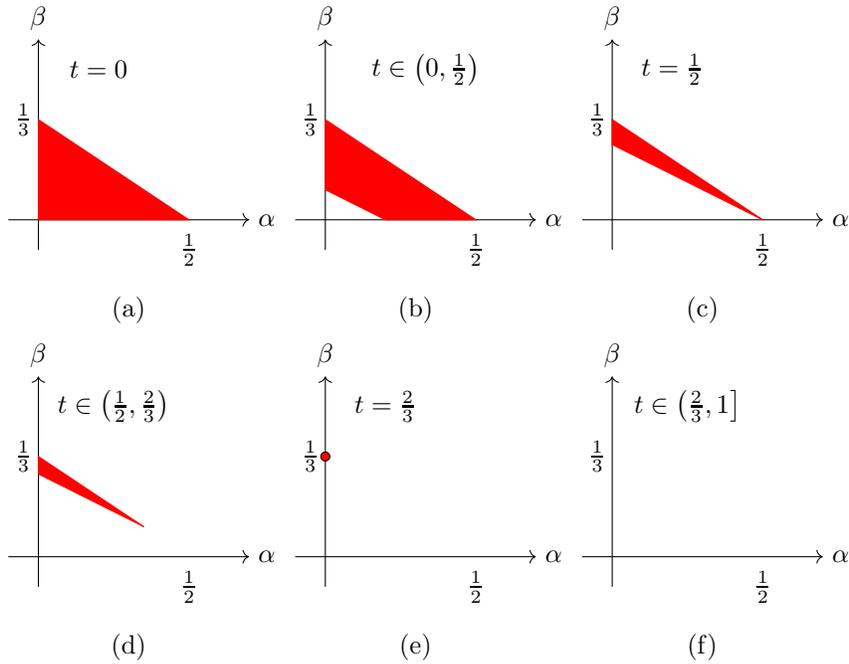
\begin{figure}
\begin{tikzpicture}[scale=3]
    \coordinate (P) at (0,0);
    \coordinate (Q) at (0,1/3);
    \coordinate (R) at (1/2,0);
    \draw[->] (-0.1, 0) -- (0.7, 0) node[right] {$\alpha$};
    \draw[->] (0, -0.1) -- (0, 0.6) node[above] {$\beta$};
    \draw[red, fill=red] (Q) -- (P) -- (R) -- (Q) -- cycle;
    \node at (1/2,-0.1) {$\frac{1}{2}$};
    \node at (-0.05,1/3) {$\frac{1}{3}$};
    \node at (1/5,1/2) {$t=0$};
    \node at (0.3,-0.3) {(a)};
\end{tikzpicture}
\begin{tikzpicture}[scale=3]
    \coordinate (M) at (0.2,0);
    \coordinate (P) at (0,0.1);
    \coordinate (Q) at (0,1/3);
    \coordinate (R) at (1/2,0);
    \draw[->] (-0.1, 0) -- (0.7, 0) node[right] {$\alpha$};
    \draw[->] (0, -0.1) -- (0, 0.6) node[above] {$\beta$};
    \draw[red, fill=red] (M) -- (P) -- (Q) -- (R) -- (M) -- cycle;
    \node at (1/2,-0.1) {$\frac{1}{2}$};
    \node at (-0.05,1/3) {$\frac{1}{3}$};
    \node at (1/3,1/2) {$t\in\left(0,\frac{1}{2}\right)$};
    \node at (0.3,-0.3) {(b)};
\end{tikzpicture}
\begin{tikzpicture}[scale=3]
    \coordinate (P) at (0,1/3);
    \coordinate (Q) at (0,1/4);
    \coordinate (R) at (1/2,0);
    \draw[->] (-0.1, 0) -- (0.7, 0) node[right] {$\alpha$};
    \draw[->] (0, -0.1) -- (0, 0.6) node[above] {$\beta$};
    \draw[red, fill=red] (Q) -- (P) -- (R) -- (Q) -- cycle;
    \node at (1/2,-0.1) {$\frac{1}{2}$};
    \node at (-0.05,1/3) {$\frac{1}{3}$};
    \node at (1/5,1/2) {$t=\frac{1}{2}$};
    \node at (0.3,-0.3) {(c)};
\end{tikzpicture}

\begin{tikzpicture}[scale=3]
    \coordinate (P) at (0.35,0.1);
    \coordinate (Q) at (0,0.55/2);
    \coordinate (R) at (0,1/3);
    \draw[->] (-0.1, 0) -- (0.7, 0) node[right] {$\alpha$};
    \draw[->] (0, -0.1) -- (0, 0.6) node[above] {$\beta$};
    \draw[red, fill=red] (Q) -- (P) -- (R) -- (Q) -- cycle;
    \node at (1/2,-0.1) {$\frac{1}{2}$};
    \node at (-0.05,1/3) {$\frac{1}{3}$};
    \node at (1/4,1/2) {$t\in\left(\frac{1}{2},\frac{2}{3}\right)$};
    \node at (0.3,-0.3) {(d)};
\end{tikzpicture}
\begin{tikzpicture}[scale=3]
    \draw[->] (-0.1, 0) -- (0.7, 0) node[right] {$\alpha$};
    \draw[->] (0, -0.1) -- (0, 0.6) node[above] {$\beta$};
    \node [draw, shape = circle, fill = red,
    minimum size = 0.12cm, inner sep=0pt] at (0,1/3){};
    \node at (1/2,-0.1) {$\frac{1}{2}$};
    \node at (-0.05,1/3) {$\frac{1}{3}$};
    \node at (0.2,1/2) {$t=\frac{2}{3}$};
    \node at (0.3,-0.3) {(e)};
\end{tikzpicture}
\begin{tikzpicture}[scale=3]
    \draw[->] (-0.1, 0) -- (0.7, 0) node[right] {$\alpha$};
    \draw[->] (0, -0.1) -- (0, 0.6) node[above] {$\beta$};
    \node at (1/2,-0.1) {$\frac{1}{2}$};
    \node at (-0.05,1/3) {$\frac{1}{3}$};
    \node at (1/4,1/2) {$t\in\left(\frac{2}{3},1\right]$};
    \node at (0.3,-0.3) {(f)};
\end{tikzpicture}
\caption{We chose $(c_1,c_2)$ to move on a segment
$(c_1,c_2)\in\{(1-t,t)|t\in[0,1]\}$. 
$\cP_{\mathbf c}$ changes with $t$: we start with a triangle (fig. a), which 
becomes a quadrilateral (fig. b), which then becomes a triangle (fig. c and 
d), which collapses to a point (fig. e); eventually, $\cP_{\mathbf c}$ becomes 
empty (fig. f).}
\label{figure:1}
\end{figure}
\end{center}
\end{example}

The relevance of invariant polyhedra lays in the fact that they are
related to siphons, which are combinatorial objects that indicate
which concentrations are allowed to vanish at steady states. Siphons have also 
proved to be useful in finding partial
solutions for the Global Atractor Conjecture \cite{CRACIUN20091551}.

\begin{defn}
Consider a mass-action network $\fN$ with variables  $\mathcal X$. A 
\emph{siphon} 
of $\fN$ is a nonempty subset $\mathcal S$ of $\mathcal X$ with the property 
that, given an arbitrary arrow $m\to m'$ of $\fN$, if
$\mathcal S \cap \mathcal M' \ne \emptyset$, then 
$\mathcal S \cap \mathcal M \ne \emptyset$ (cf., \cite{alg-030}).
\end{defn}

Note that, according to the previous definition, $\cX$ is a siphon of $\fN$. 
However, we often want to know whether a siphon contains smaller siphons 
itself. This leads us the the definition of minimal siphon.

\begin{defn}
A siphon is \emph{minimal} if none of its proper subsets is a 
siphon.
\end{defn}

\begin{example}
\label{ex:3}
In example \ref{ex:2} we have the following siphons:
$\{1,2,3,4\}$, $\{1,2,3\}$, $\{1,3,4\}$, $\{2,3,4\}$, $\{1,3\}$, $\{2,3\}$.
The only minimal ones are $\{1,3\}$ and $\{2,3\}$.
\end{example}

\begin{remark}
In Example \ref{ex:3} we can see that, in general, minimal siphons do not form 
a partition of $\cX$. This is for two reasons. The first one, is that minimal 
siphons are, in general, not disjoint. The second one, is that, in general, 
minimal siphons are not a cover for $\cX$.    
\end{remark}

Usually one decorates the vertices of the abstract shapes of
invariant polyhedra by the product of the variables whose supports coincide
with these vertices (see \cite{alg-030} for a method to compute all
combinatorial types of invariant polyhedra and their decorations). We
call such representation a \emph{decorated abstract invariant
polyhedron}.

\begin{defn}
A set of species $\Sigma = \{x_i\ | \ i\in I\subseteq[n]\}$ is called
an \emph{invariant polyhedral support} if all combinatorial
types of decorated abstract invariant polyhedra are invariant under
the map $x_i\mapsto x_j$ for all $x_i,x_j\in\Sigma$. An invariant polyhedral 
support is \emph{maximal} if it is maximal with respect to inclusion.
\end{defn}

\begin{remark}
In order to
compute the maximal invariant polyhedral supports, one needs to
compute the chamber decomposition in the space of linear conserved
quantities, and decorate each ray with the corresponding 
variables. The set of all variables decorating an individual ray is a
maximal invariant polyhedral support.
\end{remark}

A strategy towards solving equation~\eqref{eq:st-st} is by first
linearising the vector $\text{diag}(k)x^{Y_e}$, and then by imposing
binomial relations among its coordinates (cf.,
\cite{conradi2008multistationarity}). For the linearisation, note that
$x$ is nonnegative, so information about the \emph{stoichiometric
  cone} $\ker(Y_p-Y_e)\cap\RR^r_{\ge0}$ is enough. For the second
step, the authors of \cite{conradi2012multistationarity} introduced a
key concept, namely \emph{clusters}. \emph{Clusters} and special
partitions of them, \emph{preclusters}, were used in author's thesis
to show that a special class of mass-action networks, dynamical
systems with the isolation property, have toric steady states.

\begin{defn} Consider the graph $\mathfrak G$ over $[r]$
which has an edge $\{i,j\}$ whenever the $i^{\text{th}}$ and the
$j^{\text{th}}$ arrows have the same source and let $n_1,\ldots,n_r$ denote
the rows of a matrix with columns the rays of the stoichiometric
cone. If to the graph $\mathfrak G$ we adjoin the edges $\{i,s\}$ and
$\{j,s\}$ whenever $n_s$ is in the span of $\{n_i,n_j\}$ for each edge
$\{i,j\}$ of $\mathfrak G$, we obtain the \emph{preclustering
graph}. A \emph{precluster} is a connected component of the
preclustering graph.
\end{defn}

\section{The dual of a mass-action network}

There is a clear duality between the kernel of the stoichiometric matrix and 
the conservation space. This duality can be interpreted as a duality
between chemical species and rate constants. Consequently, we define
the dual of a mass-action network network to be
\[
\left(
\begin{tikzcd}
\bx^{Y_e}\arrow{r}{\bk}& \bx^{Y_p}
\end{tikzcd}\right)^* =
\begin{tikzcd}
\bk^{Y^T_e}\arrow{r}{\bx}& \bk^{Y^T_p}
\end{tikzcd}.
\]

For conservative systems, the duality between the conservation and
the stoichiometric space can be interpreted in terms of a duality between the 
corresponding nonnegative cones. We recall the following definition:

\begin{defn}
An \emph{internal cycle} of a mass-action network $\fN$ is a minimal multiset $C$ of 
the arrows of $\fN$ such that the monomial obtained from the product of all 
source monomials of arrows indexed by $C$ is equal to the monomial obtained 
from the product of all sink monomials of arrows indexed by $C$. The algebraic version of internal 
cycles are also called \emph{nonnegative elementary flux modes} in the literature.
\end{defn}

As a direct consequence, we get the following result:

\begin{thm}
\label{thm:duality}
For conservative mass-action networks, the set of conserved quantities is dual to the set of internal cycles.
\end{thm}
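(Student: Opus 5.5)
Let $N:=Y_p-Y_e$ be the stoichiometric matrix. The plan is to reduce Theorem~\ref{thm:duality} to the orthogonality between the left kernel and the (right) kernel of $N$, and then to pass to the nonnegative cones.

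The first step is to translate internal cycles into linear algebra. A multiset $C$ of arrows is recorded by a vector $v\in\ZZ^r_{\ge0}$, where $v_j$ is the multiplicity of the $j^{\text{th}}$ arrow. The product of the source monomials of $C$ is $\bx^{Y_e v}$, and the product of the sink monomials is $\bx^{Y_p v}$. So the defining condition of an internal cycle is $Y_e v = Y_p v$, that is, $Nv=0$. Minimality of $C$ says that $v$ has minimal support among the nonzero integer vectors of $\ker N\cap\RR^r_{\ge0}$. Up to positive scaling, these are exactly the extreme rays of the stoichiometric cone $\ker(N)\cap\RR^r_{\ge0}$. I would prove this with the standard lemma that in a pointed cone $\ker N\cap\RR^r_{\ge 0}$, the support-minimal elements are the extreme rays. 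This is the nonnegative elementary flux mode characterisation recalled after the definition of internal cycles.

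The second step treats conserved quantities the same way. A conserved quantity is $\bf z\in\Lcons=\text{left--ker}(N)$. In the conservative case, by the Remark preceding the example, it suffices to take $\bf z$ in the conservation cone $\Lcons\cap\RR^n_{\ge0}$, generated by its extreme rays.

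The third step establishes the duality itself, at two levels.
\begin{enumerate}
\item Linearly, $\Lcons=(\operatorname{im}N)^\perp$ in $\RR^n$, while $\ker N=(\operatorname{im}N^T)^\perp$ in $\RR^r$. Hence $\dim\Lcons+\operatorname{rank}N=n$ and $\dim\ker N+\operatorname{rank}N=r$. The pairing is the one between species and rate constants encoded by the dual network $\bk^{Y_e^T}\xrightarrow{\bx}\bk^{Y_p^T}$. Transposing $N$ swaps the roles of the two kernels: the conserved quantities of $\fN$ become the flux vectors of $\fN^*$, and the flux vectors of $\fN$ become the conserved quantities of $\fN^*$.
\item To get a statement about the sets themselves, I would show that the conservation cone of $\fN$ corresponds to the stoichiometric cone of $\fN^*$ and vice versa. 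Under this correspondence, the extreme rays of the conservation cone (the generating conserved quantities) correspond to the internal cycles of $\fN^*$, and the internal cycles of $\fN$ correspond to the conserved quantities of $\fN^*$.
\end{enumerate}
The correspondence follows from $\ker(N^T)=\text{left--ker}(N)$ together with the first step applied to $\fN^*$, whose stoichiometric matrix is $Y_p^T-Y_e^T=N^T$.

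The main obstacle is making ``dual'' precise enough to be a genuine statement, and the place conservativity enters. Conservativity means there is a strictly positive $\bf z\in\Lcons$. By Gordan's/Stiemke's alternative, this is equivalent to the nonexistence of a nonnegative nonzero $v$ with $Nv\ge0$ beyond the trivial case. In particular it forces the relevant cones to be pointed, so the sets of extreme rays are finite generating sets and the correspondence of step 3(b) is well defined. I would invoke the Farkas-type lemma at this point and check carefully that the cones are pointed, so that minimal support corresponds to extremality.
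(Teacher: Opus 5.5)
Your proposal follows the paper's proof. The paper also identifies internal cycles with the extreme rays of the stoichiometric cone $\ker(Y_p-Y_e)\cap\RR^r_{\ge0}$ via the elementary-flux-mode correspondence (Schuster's Proposition~4.1, which uses minimality of support, the reading you adopt) and then appeals to the kernel/left-kernel duality; your step~3(b) makes that duality precise through the stoichiometric matrix $(Y_p-Y_e)^T$ of $\fN^*$, exactly as in the paper's closing example. One correction to your final paragraph: both cones lie in a nonnegative orthant and so are always pointed, and the Stiemke alternative to a strictly positive $z\in\Lcons$ is a $v$ of arbitrary sign with $Nv\ge0$, $Nv\ne0$, so conservativity is not what gives pointedness. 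Its actual role is the one in your step~2: a strictly positive $z$ forces the extreme rays of $\Lcons\cap\RR^n_{\ge0}$ to span $\Lcons$.
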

\begin{proof}
As there is a one to one correspondence between the extreme rays of
the stoichiometric cone and the minimal cycles of the mass-action
network (cf., 
\cite[Proposition~4.1]{schuster1994elementary}), one can conclude that 
conserved quantities are dual to internal cycles.
\end{proof}

This leads us to the following conjecture

\begin{conj}
\label{conj:duality}
For conservative mass-action networks that do not have two species
with exactly the same rates and have at least one positive steady
state, there is a duality relation between the sets of preclusters and
of maximal invariant polyhedral supports.
\end{conj}

\begin{remark}
  The condition $\dot x_i \neq \dot x_j$ for $i \neq j$ is necessary
  in order to make sure no further clustering on the left kernel is
  done. We require that there is at least one positive steady state so
  that the internal cycles are a refinement of preclusters.
\end{remark}

We conclude with the following question:

\begin{question}
Is there a duality relation between preclusters and siphons?
\end{question}

A (partial) positive answer to the previous question is supported by
the close relation between siphons and maximal invariant polyhedral
supports. On the other hand, it would be interesting to understand
whether a possible duality between preclusters and siphons would have
consequences for toric dynamical systems. It is well known that the
Global Attractor Conjecture holds in small dimensions
\cite{CRACIUN20091551}, and we wonder whether there is a similar
argument, through duality, in small codimensions.

\begin{example}
Consider the following network:
\[
\begin{tikzcd}[ampersand replacement=\&, row sep=small, every
arrow/.append style={shift left=.5}]
\displaystyle
\mathfrak {N}:\quad x_3x_4 \ar{r}{k_1} 
\& \displaystyle x_1x_2, 
\qquad \displaystyle x_1 \ar{r}{k_2} 
\& \displaystyle x_3 \end{tikzcd}
\]
Its dynamics is given by:    
\[
\left\{
\begin{array}{ll}
\dot x_1 = & \ \ \ \! k_1x_3x_4 - k_2x_1, \\
\dot x_2 = & \  \ \ \! k_1x_3x_4, \\
\dot x_3 = & -k_1x_3x_4 + k_2x_1, \\
\dot x_4 = & -k_1x_3x_4,\\
\end{array}
\right.
\qquad
x_i\ge0, \ i \in [4].
\]
Its conservation laws are
$x_1+x_3=c_1$ and
$x_2+x_4=c_2$. It has one combinatorial type of invariant polyhedron, which we decorate 
as:
\[
\begin{tikzpicture}[scale=0.8]
\draw[->] (0,0)--(3.6,0);
\draw[->] (0,0)--(0,3.2);
\node at (3.6,0.4) {$c_1$};
\node at (0.4,3.2) {$c_2$};
\draw (0.5,0.5)--(0.7,2.5);
\draw (0.7,2.5)--(2.5,2.3);
\draw (2.5,2.3)--(2.5,0.5);
\draw (2.5,0.5)--(0.5,0.5);
\node at (0.5,0.3) {{$x_1x_2$}};
\node at (2.5,0.3) {{$x_1x_4$}};
\node at (0.7,2.7) {{$x_2x_3$}};
\node at (2.5,2.5) {{$x_3x_4$}};
\end{tikzpicture}
\]
     
Its minimal siphons are $\{\{x_4\},\{x_1,x_3\}\}$ and its MIPS are
$\{\{x_1,x_3\}$, $\{x_2,x_4\}\}$. The dual network is:
\[
\begin{tikzcd}[ampersand replacement=\&, row sep=small, every
arrow/.append style={shift left=1}]
\displaystyle \mathfrak {N}^*:\quad 1 \ar{r}{x_2}
\& \displaystyle k_1 \ar{l}{x_4}\ar{r}{x_3} \& k_2 \ar{l}{x_1}
\end{tikzcd}
\]
Finally, the preclusters of $\mathfrak N^*$ are $\{\{x_1,x_2,x_3,x_4\}\}$ and 
its internal cycles are $\{\{x_1,x_3\},\{x_2,x_4\}\}$. Note how the internal 
cycles partition of the only precluster.
\end{example}

We finish this section with the following scheme of the
state of art:
\[
\begin{array}{ccc}
\ker (Y_{p}-Y_{e})
&\xleftrightarrow{
\qquad \quad \text{dual}
\qquad \quad }&\ker\left((Y_{p}-Y_{e})^{T}\right)\\
\downarrow&&\downarrow\\
\begin{array}{c}
\text{Clustering graph}\\
\\
{\mbox{\Large $\simeqd$} } \\
\vspace{0.25cm}
\end{array}
&&
\begin{array}{c}
\text{Chamber decomposition}\\
\simeqd\\
\text{Decorations of}\\
\text{invariant polyhedra}\\
\downarrow\\
\end{array}\\
\{\text{Internal cycles}\}& \xleftrightarrow{
\qquad \quad \text{dual}\qquad \quad}&
\{\text{MIPS}\}\\
\qquad \uparrow{\text{Refine}}&&\qquad \downarrow{\text{"Refine"}}\\
\{P|\ P \text{ is a precluster}\}&\xleftrightarrow{
\quad \text{¿duality relation?}\quad}& \{S|\ S \text{ is a siphon}\}
\end{array}
\]

Finally, we would like to mention that the notion of duality
considered in this paper already appeared in 2020 in an earlier
preprint of the author \cite{iosifDualHal}. Only recently did the
author become aware that, around the same time, Blanchini and Giordano
\cite{blanchiniDual} introduced essentially the same notion of duality
in a broader dynamical context. These works were developed independently
and emphasise different aspects of this duality.

\section*{Acknowledgments}
I would like to thank Franco Blanchini and Giulia Giordano for reading
this article and commenting on it. This research started during the
author's thesis. The beginning of this work was funded by the projects
DFG 284057449 and DFG-RTG MathCore 314838170. The main part of this
work was funded by the Project 2025/SOLCON-160677 from Rey Juan Carlos
University. Special thanks to Lamprini Ananiadi for her valuable
perspectives and frequent dialogues on the subject.

\section*{Statement about the use of AI}
During the preparation of this work the author used Copilot and Gemini
in order to locate language errors and obvious typos.  Copilot was
also used to suggest a list of scientific journals where the present
article could better fit. In case AI was used to suggest other
scientific sources (e.g., book chapters, articles), all recommended
sources were verified and consulted directly, rather than relying on
the automated output. After using these tools, the author implemented,
manually reviewed and edited the content, serving as the sole editor,
as needed, and takes full responsibility for the content of the
published article.


\end{document}